\newtheorem{lemma}{Lemma}
\newtheorem{theorem}{Theorem}
\newcommand{\tnMarkZero}{\ensuremath{0\hspace{-1.2ex}/\,}}
\newcommand{\tnMarkOne}{\ensuremath{1\hspace{-1.2ex}/\,}}
\newcommand{\bdiv}{\mathop{\mathrm{div}}}
\newcommand{\via}{\mathop{\mathrm{via}}}
\title{Small Polynomial Time Universal Petri Nets}
\author{Dmitry A. Zaitsev
\institute{International Humanitarian University\\
Department of Computer Engineering}
\institute{Fontanskaya Doroga, 33, Odessa 65009, Ukraine}
\email{daze@acm.org}
}
\begin{document}
\maketitle

\begin{abstract}
The time complexity of the presented in 2013 by the author small universal Petri nets with the pairs of places/transitions numbers (14,42) and (14,29) was estimated as exponential. In the present paper, it is shown, that their slight modification and interpretation as timed Petri nets with multichannel transitions, introduced by the author in 1991, allows obtaining polynomial time complexity. The modification concerns using only inhibitor arcs to control transitions' firing in multiple instances and employing an inverse control flow represented by moving zero. Thus, small universal Petri nets are efficient that justifies their application as models of high performance computations. 
\end{abstract}

\section{Introduction}

The first explicitly constructed universal Petri net (UPN) \cite{Zv12UPN} consists of thousands of vertices and works exponentially slow; for its design, a direct simulation of Petri net behavior according to its state equation was applied. The construction was done within the class of pure inhibitor Petri nets.

Small universal Petri nets with the pairs of places/transitions numbers (14,42) \cite{Zv13-14-29} and (14,29) \cite{Zv13-14-42} were constructed using the technique of simulating know small universal Turing machines (UTM): the net UPN(14,42) simulates UTM(6,4) \cite{Ne08,NW09FI} and the net UPN(14,29) simulates weakly universal Turing machine (2,4) -- WUTM(2,4) \cite{Ne08,NW09W}. The construction was done within the class of deterministic inhibitor Petri nets. Their time complexity was estimated as exponential. 

In the present work, these nets are modified: a place LEFT added to represent the left moves of the TM control head in an explicit way; only inhibitor arcs are used to control transitions which fire in multiple instances; the control flow is represented in an inverse way with moving zero to provide check via inhibitor arcs. Interpretation of the modified UPN as a Petri net with multichannel transitions \cite{Zv91,ZvSl97} gives polynomial estimations of time complexity. Thus, constructed UPN are efficient models of computations.

In sequel, net PolyUPN(15,29) is constructed on UPN(14,29), and it is shown that interpreting its behavior as a deterministic timed inhibitor Petri net which transitions' firing time is discrete and equal to unit and unlimited number of transitions' channels gives polynomial time complexity; the corresponding class of Petri net is named a deterministic arithmetic Petri net. In the same way the results could be obtained for other early constructed universal Petri nets.

\section{A deterministic arithmetic Petri net}

A class of \textit{deterministic arithmetic Petri net}  is employed in the present paper that represents a combination of \textit{a deterministic inhibitor Petri net} (DIPN) \cite{Zv13-14-29,Zv13-14-42} and \textit{a timed Petri net with multichannel transitions} \cite{Zv91,ZvSl97} for the case when the firing time of each transition is equal to unit. Since all the times are equal to unit, it is more convenient to not introduce the time concept at all but to perceive Petri net as a synchronous \cite{Pe81} net when at each step the maximal valid set of firable transitions fires. The validity of the maximal set is defined as not leading to negative markings. But when using transitions' numbers as priorities in deterministic inhibitor Petri net, only one transition could fire at a step possibly in a few instances. The necessity of restricting the net behavior making it deterministic is stipulated by the deterministic character of simulated TM.

\textit{A deterministic arithmetic Petri net} (DAPN) is is a bipartite directed multigraph supplied with a dynamic process that has synchronous deterministic behavior. DAPN is denoted as a quadruple $N=(P,T,F,{\mu}^0)$ , where $P$ and $T$ are disjoint sets of vertices called places and transitions respectively, the mapping $F$ defines arcs between vertices, their type and multiplicity,  and the mapping ${\mu}^0$ represents the initial state (marking). The transition choice order is defined by their enumeration $T=\{t_1,t_2,...,t_n\}, n=|T|$; places are also supposed been enumerated $P=\{p_1,p_2,...,p_m\}, m=|P|$, and the places' marking is represented as a vector $\overline{\mu}, {\mu}_j={\mu}(p_j)$ with integer index $j, 1 \le j \le m$, where ${\mu}_j$ is a nonnegative integer equal to the number of tokens situated in place $p_j$. 

The mapping ${F:(P \times T)\rightarrow \{0,w^{-},-1\} \cup (T\times P)\rightarrow\{0,w^{+}\}}$ defines arcs, their types and multiplicities, where transition input and output arc multiplicities $w^{-}$ and $w^{+}$ are natural numbers, a zero value corresponds to the arc absence, a positive value -- to the regular arc with indicated multiplicity, and a minus unit -- to the inhibitor arc. As it was shown in \cite{Pe81}, Petri nets with the multiple arcs are easily converted to \textit{ordinary} Petri nets with regular arcs' multiplicity equal to unit. To avoid nested indices we denote $w^{-}(p_j, t_i)$ as $w^{-}_{j,i}$ and $w^{+}(t_i, p_k)$ as $w^{+}_{i,k}$.

In graphical form, places are drawn as circles and transitions as rectangles. An inhibitor arc is represented by a small hollow circle at its end, and a small solid circle (read arc) represents the abbreviation of a loop. Regular arc's multiplicity greater than unit is inscribed on it and place's marking greater than zero is written inside it.

To estimate firability conditions on each incoming arc of a transition, the following auxiliary operation is defined
$$x \via y=\left\{
\begin{array}{lll}
x \bdiv y & if~y>0, \\
0 & if~y=-1,~x>0, \\
\infty & if~y=-1,~x=0.
\end{array}
\right.$$
To avoid inconsistency with infinite number of instances, here we prohibit transitions without input regular arcs.

The \textit{behavior} (dynamics) of a DAPN could be described by the corresponding state equation similarly to \cite{Zv12UPN,Zv91,ZvSl97}. The present work considers the behavior as result of sequential applying the following \textit{transition firing rule}: 

\begin{enumerate}

\item net transitions ${t_{i}}$ are checked sequentially with integer index $i$ ranging from $1$ to $n$;

\item the number of instances of transition ${t_{{i}}}$ firable at the current step is equal to

$v_i=v(t_{i})=\min\limits_{j}{({\mu}_j \via w^{-}_{j,i})},~1 \le j \le m,~F(p_{{j}},t_{{i}}) \neq 0$.  

\item the first transition ${t_{i}}$ (with the minimal value of index ${i}$) having $v_i>0$ fires in $v_i$ instances; 

\item when transition ${t_{{i}}}$ fires, it

\begin{enumerate}
\item extracts ${v_i}*w^{-}_{j,i}$ tokens from each its input place $p_{{j}}$ for regular arcs $F(p_j,t_i)>0$;

\item puts ${v_i}*w^{+}_{i,k}$ tokens into each its output place $p_{{k}}$, $F(t_{{i}},p_{{k}})>0$;
\end{enumerate}

\item the net halts if firable transitions are absent.

\end{enumerate}

When a transition, having a single regular incoming arc with multiplicity $x$ from place $p$ and a single regular outgoing arc with multiplicity $y$ to  place $p'$, fires, it implements the following computations $\mu(p):=\mu(p) \bmod x,~ \mu(p'):=\mu(p') + y \cdot (\mu(p) \bdiv x)$. That is why it is named arithmetic Petri net: it implements division by $x$ and multiplication by $y$. Choosing either $x$ or $y$ equal to unit we obtain either pure multiplication or pure division.

Thus, transitions could be thought of as virtual actions. The number of really started actions depends on the amount of available resources represented by transitions' input places. Why we should restrict the number of transitions' instances to unit and fire them in sequence, as in classical Petri net, when available resources allow firing them simultaneously? Anyway, classical sequential order of transitions' firing could be obtained as a special case attaching a place to each transition connected with read arc and having marking equal to unit.

\section{Constructing PolyUPN(15,29)}

As in UPN(14,29) \cite{Zv13-14-29}, the source information for simulation is the transition function of WUTM(2,4) \cite{Ne08,NW09W} and the encoding of its states and tape symbols given by Table~\ref{tab:wutm24}. In WUTM(2,4), an infinite repetition of definite blank words is written on its tape: ${w_{{l}}=00{\tnMarkZero}1}$ to the left and ${w_{{r}}=0{\tnMarkOne}{\tnMarkZero}{\tnMarkZero}0{\tnMarkOne}}$ to the right of the working zone.  According to the function $s(x_{{l-1}}x_{{l-2}}{...}x_{{0}})=\sum_{i=0}^{l-1}{s(x_{{i}})\cdot 5^{{i}}}$ for the tape words' encoding \cite{Zv13-14-29}, the codes of the left and right blank words are: ${{sw}_{{l}}={167}}$ and ${{sw}_{{r}}={13596}}$.

\begin{table}
  \begin{center}
    \begin{tabular}{|c|c|c|c|}
      \hline
      $\Sigma \backslash \Omega$ &  & $u_1$ & $u_2$\\
      \hline
        & $s(\Sigma) \backslash s(\Omega)$ & 0 & 1\\
      \hline
      0 & 1 & $3,left,0$ & $4,right,0$\\
      \hline
      1 & 2 & $4,left,1$ & $3,left,1$\\
      \hline
      $\tnMarkZero$ & 3 & $4,left,0$ & $1,right,1$\\
      \hline
      $\tnMarkOne$ & 4 & $4,left,0$ & $2,right,1$\\
      \hline
    \end{tabular}
  \end{center}
  \caption{WUTM(2,4) behavior \cite{Ne08,NW09W} and its encoding}
  \label{tab:wutm24}
\end{table}

To obtain PolyUPN(15,29), net UPN(14,29) \cite{Zv13-14-29} was modified in the following way:

\begin{itemize}
\item place $LEFT$ ($p_{15}$) added, to represent the left moves of the control head, together with arcs from transitions simulating instructions with the left move: $t_4$, $t_6$, $t_7$, $t_8$, $t_{10}$; thus a pair of places $LEFT$ ($p_{15}$) and $RIGHT$ ($p_7$) are complimentary: only one of them contains a token when simulating a TM step;
\item a regular arc is added from place $LEFT$ ($p_{15}$) to transition $t_{29}$ to clean it after simulating the left move;
\item the control flow within the sequence of subnets $MA5LR$, $MD5LR$ represented by the sequence of vertices $p_6$, $t_{13}$, $p_{10}$, $t_{16}$, $p_{11}$, $t_{19}$, $p_8$, $t_{22}$, $p_{12}$, $t_{25}$, $p_{13}$, $t_{28}$ or $t_{29}$ was modified; a token was put into each place; reverse order of arcs was used; inhibitor arcs from the previous places were added; thus, the control flow is represented by moving zero marking;
\item all the checks of control flow places on unit with read arks are replaced by checks on zero with inhibitor arks;
\item all the checks of place $RIGHT$ ($p_7$) on unit with read arks are replaced by checks of place $LEFT$ ($p_{15}$) on zero;
\item all the arcs from transitions, simulating TM instructions $t_3 - t_{10}$ to place $MOVE$, were reveresed in their direction.
\end{itemize}

The general scheme of PolyUPN(15,29) is shown in Fig.~\ref{fig:top15-29}. Subnets are depicted as rectangles with double line border. Some vertices have mnemonic names besides their numbers. Used subnets $FS$, $MA5LR$, and $MD5LR$ are represented in Fig.~\ref{fig:fs}, Fig.~\ref{fig:ma5lr}, Fig.~\ref{fig:md5lr} correspondingly. Places with the same name (number) are considered as the same place and should be merged all over the components. The obtained final assembly of PolyUPN(15,29) is shown in Fig.~\ref{fig:polyupn15-29}. Since it is rather tangled, it could be represented in a tabular form similar to UPN(14,29) \cite{Zv13-14-29}.

Place $U$ contains encoded TM state $s(u)$, place $X$ contains encoded current cell symbol $s(x)$, and places $L$ and $R$ contain encoded left and right parts of the tape working zone respectively regarding the current cell.  At the beginning of each computation step, place $STEP$ launches subnet $FS$, which simulates  WUTM(2,4) transition function. Subnet $FS$ produces the encoding $s(u')$ of the new state and the encoding $s(x')$ of the new symbol in places $U$ and $X$ respectively to simulate the Turing machine instruction. Subnet $FS$ also puts a token into either place $RIGHT$ or place $LEFT$ to indicate the control head right or left moves correspondingly. A token is extracted from place MOVE after subnet $FS$ has finished that launches the sequence of subnets $MA5LR$, $MD5LR$, which simulates the control head moves. At the end of a simulated computation step, a token is put into place $STEP$ that allows the simulation of the next instruction to begin. Moreover, places LEFT and RIGHT are cleared. 

\begin{figure}
  \centering
    \includegraphics [width=0.6\textwidth] {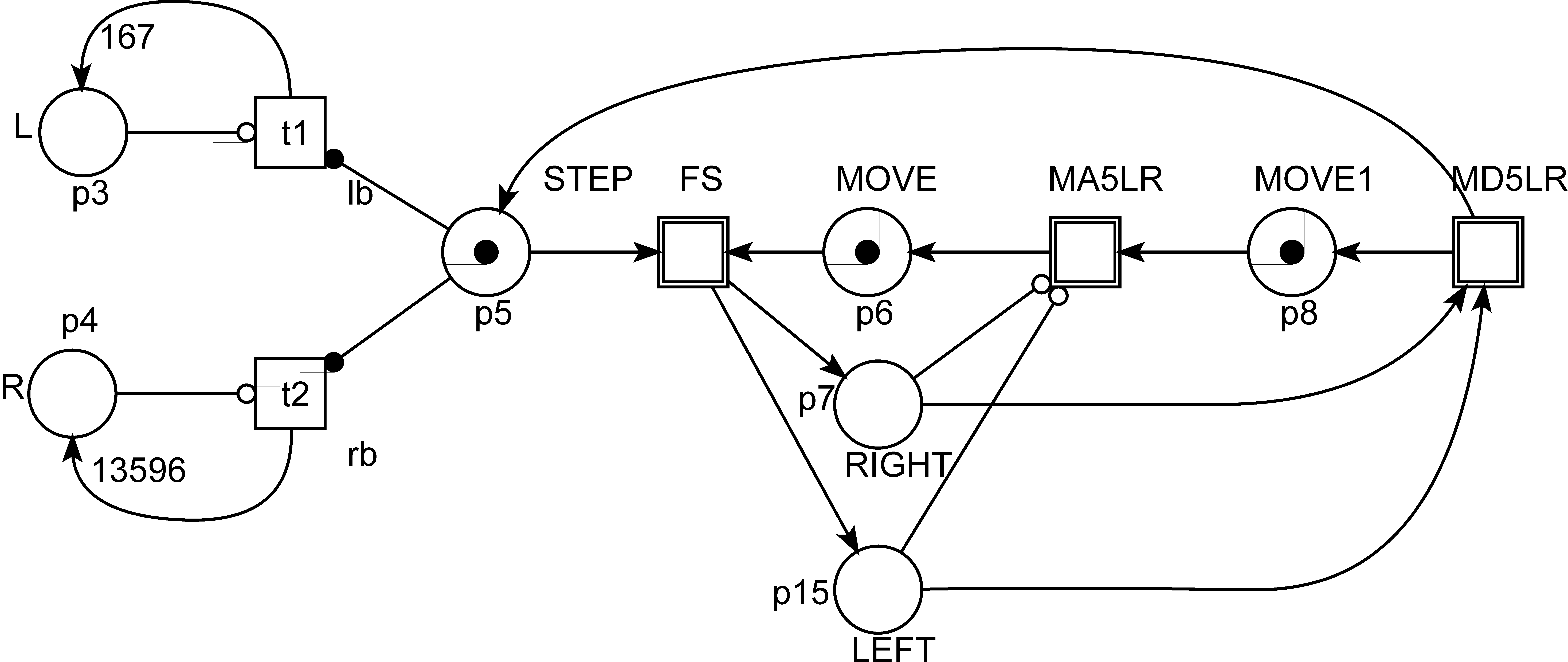}
  \caption{General arrangement of PolyUPN(15,29)}
  \label{fig:top15-29}
\end{figure}

Subnet $FS$ (shown in  Fig.~\ref{fig:fs}) simulates the transition function of WUTM(2,4) as follows: the Turing machine instruction for each pair $(x,u)$ is encoded by a transition with the label $(s(x), s(u))$ in the subnet $FS$; its input arcs from places $X$ and $U$ have corresponding multiplicity, zero multiplicity means the arc absence. For a TM instruction ${(x,u,x',v,u')}$, output arcs to places $X$ and $U$ have multiplicity ${s(x')}$ and ${s(u')}$ correspondingly with an extra arc either to place $RIGHT$ when ${v=right}$ or place $LEFT$ when ${v=left}$. Subnet $FS$ of PolyUPN(15,29) is represented in Fig.~\ref{fig:fs}; each transition has an incoming arc from place $STEP$ and place $MOVE$. The required relations of priorities are shown via dashed auxiliary arcs connecting transitions, providing only one firable transition on each step of UPN work.

\begin{figure}
  \centering
    \includegraphics [width=0.5\textwidth] {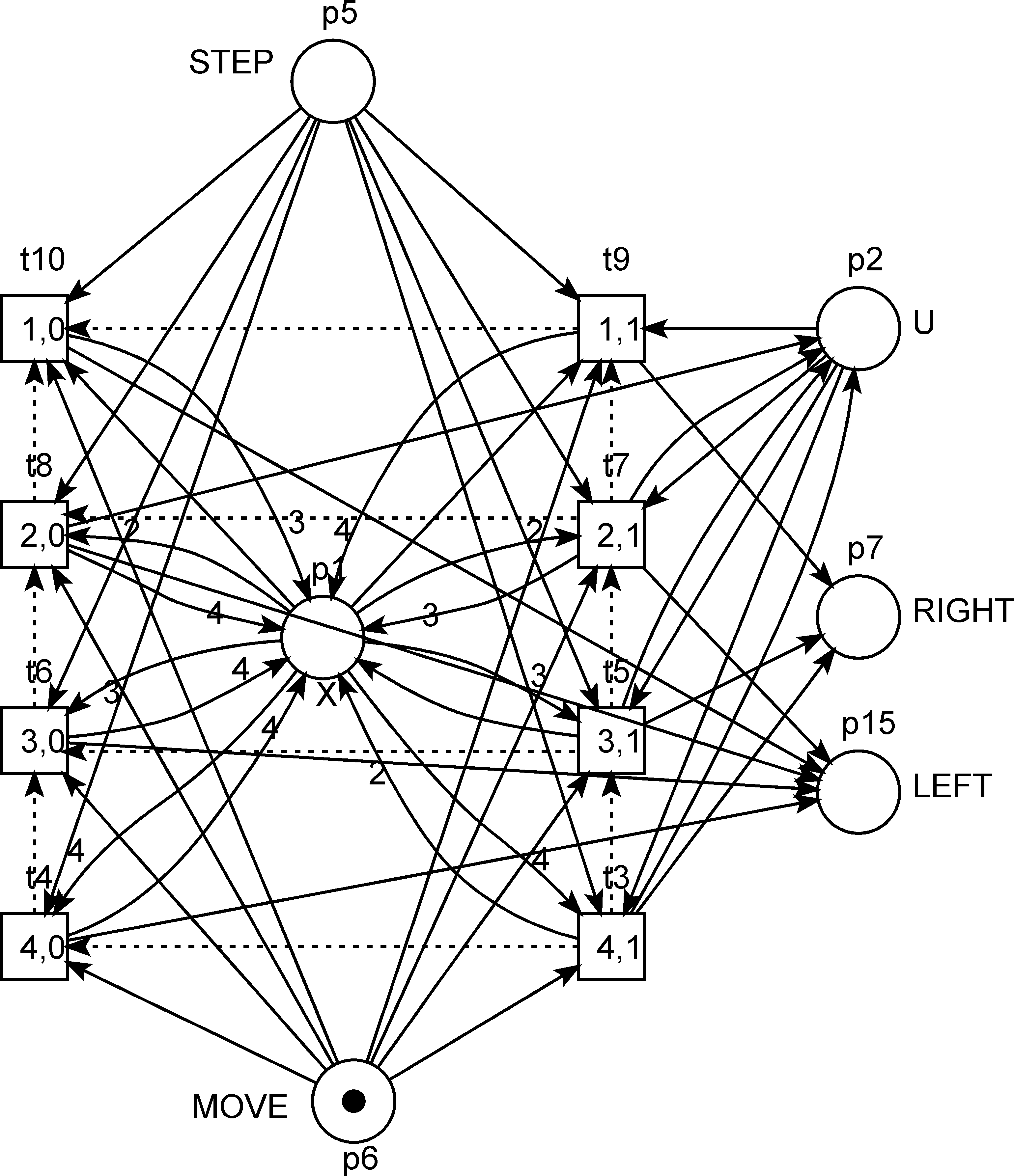}
  \caption{Subnet $FS$ simulating WUTM(2,4) transition function}
  \label{fig:fs}
\end{figure}

The TM tape is represented by the places $L$, $X$, and $R$ containing the encoded left part of the working zone, the current cell symbol and the encoded right part of the working zone correspondingly. The moves of the tape head on the tape are simulated by the two connected subnets $MA5LR$ and $MD5LR$ shown in Fig.~\ref{fig:ma5lr}, Fig.~\ref{fig:md5lr}. The meaning of subnets' names is the following: $MA5$ multiplication and addition with radix 5 (${S:=S\cdot 5+X}$), $MD5$ modulo and division with radix 5 (${S:=S \bdiv 5}$, ${X:=S \bmod 5 }$); $LR$ choice of places either $L$ or $R$, where codes of the left and right parts of the tape, regarding the current cell symbol code $X$, are stored, depending on the marking of places $LEFT$ and $RIGHT$.  Two transitions $lb$ and $rb$ simulate peculiarities of weakly universal TM work, they add the blank word codes ${{sw}_{{l}}}$ and ${{sw}_{{r}}}$ to the codes $L$ and $R$ of the left and right parts of the tape working zone correspondingly when its value is equal to zero.

\begin{figure}
  \centering
    \includegraphics [width=0.5\textwidth] {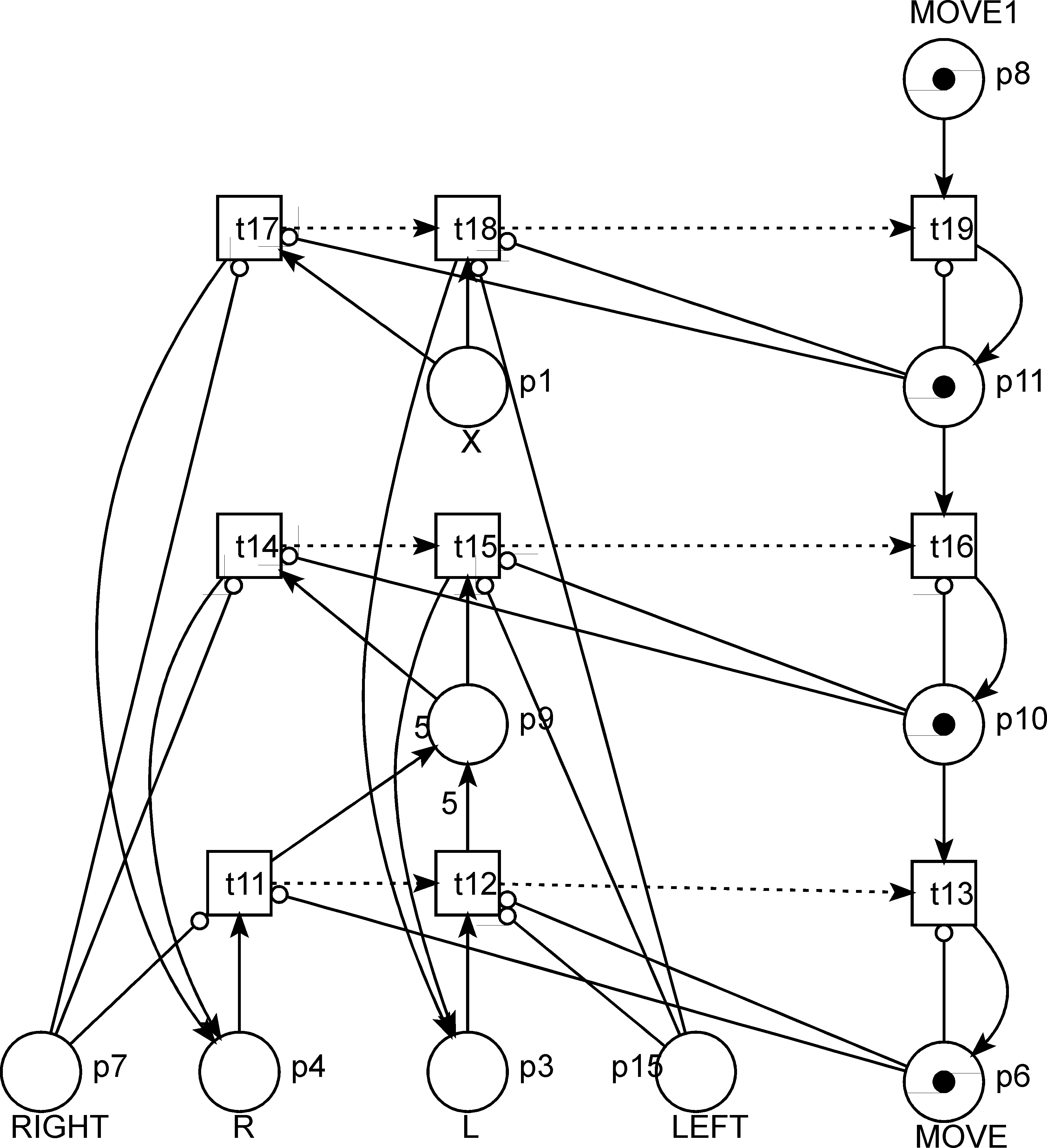}
  \caption{Basic subnet of the tape encoding $MA5LR$ -- add a symbol to the code}
  \label{fig:ma5lr}
\end{figure}

Thus, the sequence of subnets $MA5LR$, $MD5LR$ implements the following operations: 
\begin{itemize}
\item to simulate a left move (when place ${RIGHT=0}$): ${R:=R\cdot 5+X}$, ${L:=L \bdiv 5}$, ${X:=L \bmod 5}$;
\item to simulate a right move (when place ${LEFT=0}$): ${L:=L\cdot 5+X}$, ${R:=R \bdiv 5}$, ${X:=R \bmod 5}$.
\end{itemize}

\begin{figure}
  \centering
    \includegraphics [width=0.6\textwidth] {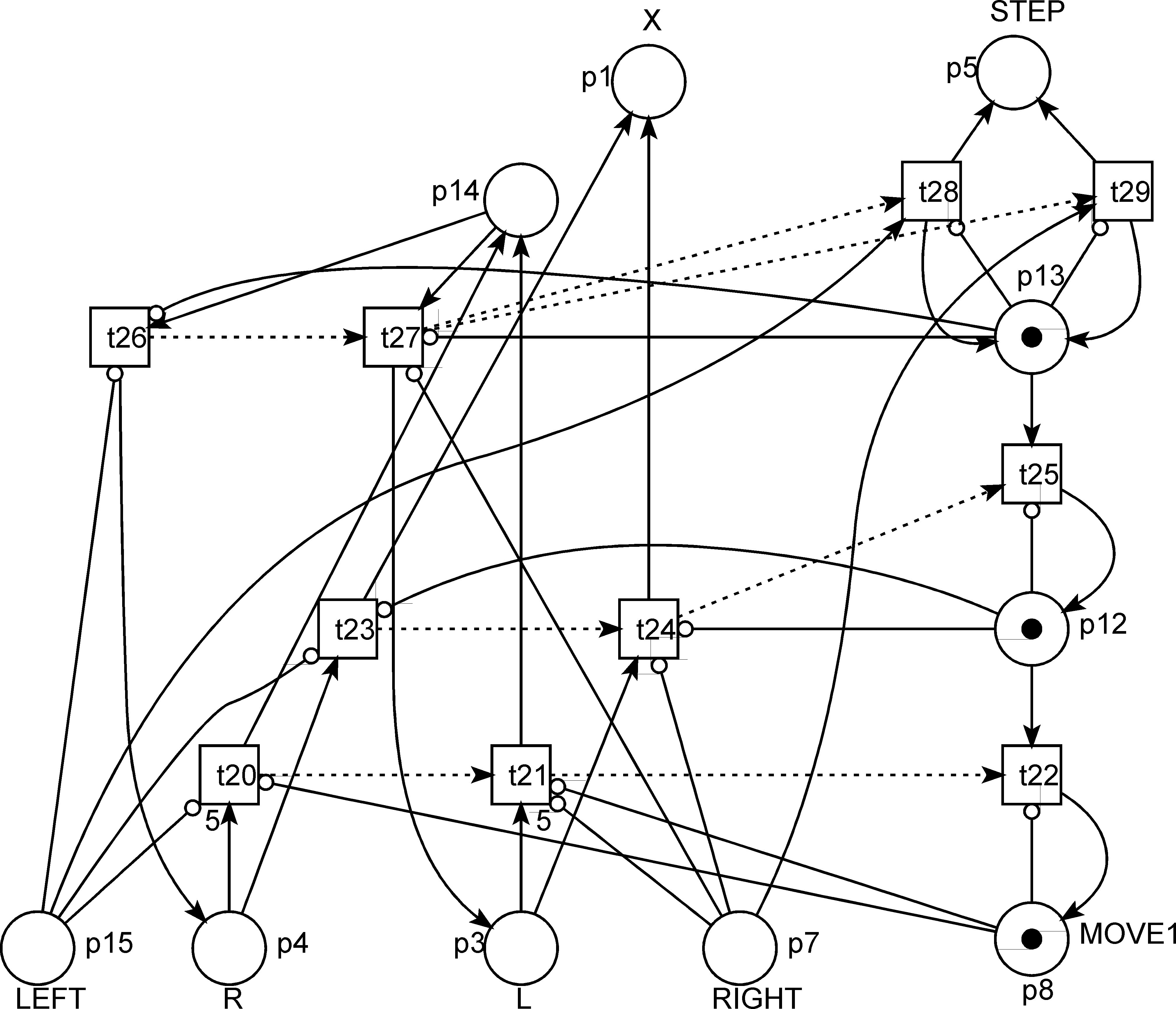}
  \caption{Basic subnet of the tape decoding $MD5LR$ -- extract a symbol from the code}
  \label{fig:md5lr}
\end{figure}

PolyUPN(15,29) is composed according to Fig.~\ref{fig:top15-29} via inserting subnets and merging places with the same names. Transitions are enumerated to provide required relations of priorities; places are enumerated in an arbitrary order; the number of arcs is 125. The obtained PolyUPN(15,29) is shown in Fig.~\ref{fig:polyupn15-29}. 

\begin{figure}
  \centering
    \includegraphics [width=0.9\textwidth] {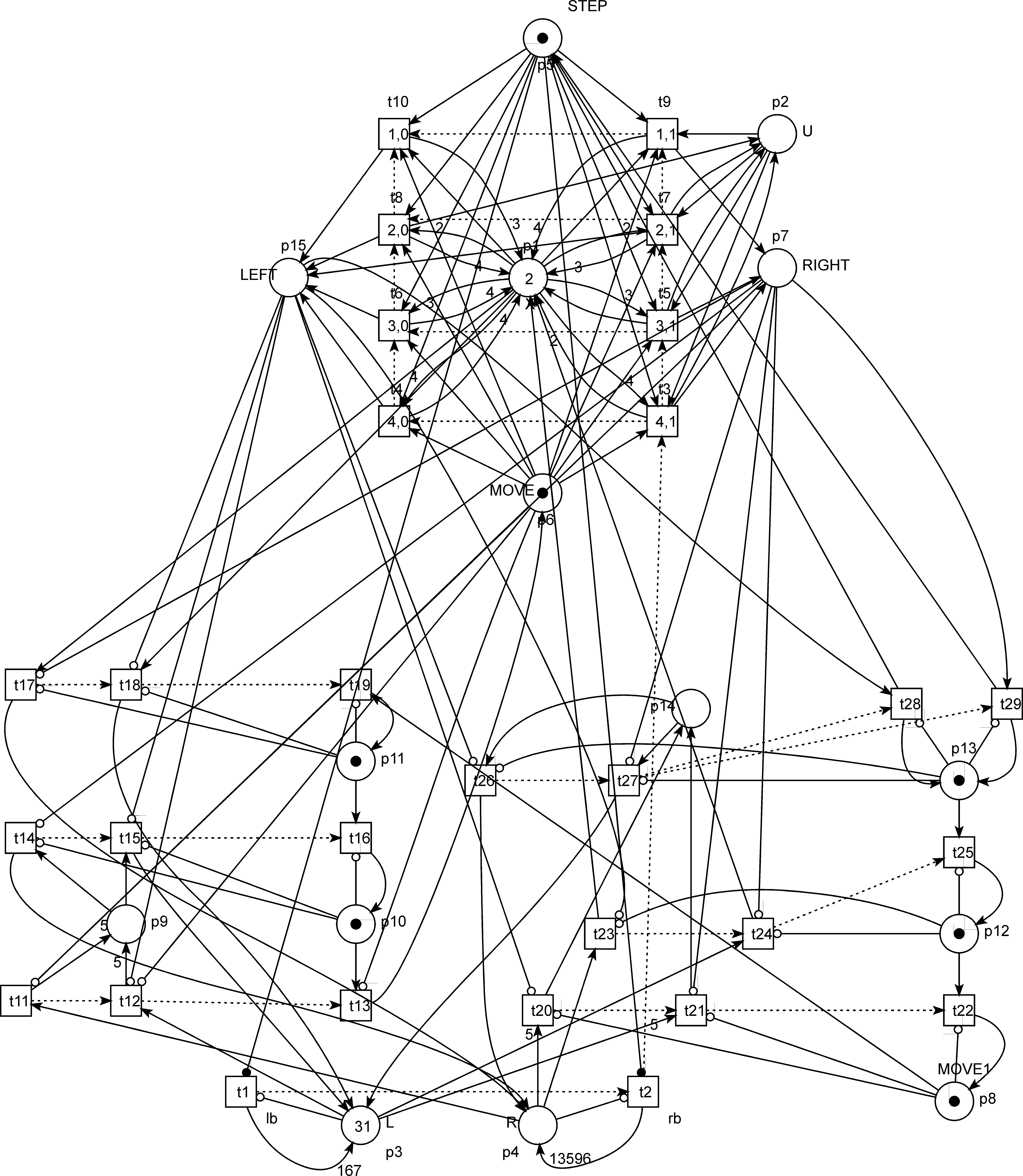}
  \caption{UPN(14,29) in graphical form}
  \label{fig:polyupn15-29}
\end{figure}

The work of PolyUPN(15,29) was simulated in the environment of system Tina (http://laas.fr/tina) in the step-by-step order; the results coincide with  the tracing table of \cite{Zv13-14-29}. Tina does not provide firing transitions in a few instances but the conditions for such firing were preserved: the only firable transition has been firing sequentially during amount of steps equal to the number of its instances in DAPN.

\section{Estimations of PolyUPN(15,29) time complexity}

\begin{lemma}
Subnet $FS$ simulates TM transition function in a single DAPN step.
\label{lem:tmfs}
\end{lemma}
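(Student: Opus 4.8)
The plan is to prove that subnet $FS$ executes exactly one DAPN step by examining how the transition firing rule interacts with the structure of $FS$. First I would observe that, by construction, the transitions of $FS$ correspond bijectively to the TM instructions, each encoded with a label $(s(x),s(u))$, and each has incoming regular arcs from places $X$ and $U$ with multiplicities $s(x)$ and $s(u)$ respectively, together with incoming arcs from $STEP$ and $MOVE$. The essential point is that the computation of the number of firable instances $v_i = \min_j(\mu_j \via w^-_{j,i})$ forces at most one instance: since $\mu(STEP)=1$ when $FS$ is launched, any transition reading from $STEP$ (with a regular arc of multiplicity one) has $v_i \le \mu(STEP) \bdiv 1 = 1$, so no transition can fire in multiple instances within this phase.

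Next I would argue that exactly one transition of $FS$ is firable, and it is the one matching the current TM configuration. Here I would use the encoding: the arc multiplicities $s(x)$ and $s(u)$ are chosen so that $\mu(X) \via w^-_{X,i} \ge 1$ and $\mu(U) \via w^-_{U,i} \ge 1$ simultaneously precisely when $\mu(X)=s(x_i)$ and $\mu(U)=s(u_i)$ divide evenly, which pins down a unique instruction. Several transitions might nominally pass the divisibility test, so I would invoke the priority relations encoded via the dashed auxiliary arcs: the firing rule selects the transition with minimal index having $v_i>0$, and the enumeration is arranged (as the text states, ``providing only one firable transition on each step'') so that only the intended instruction fires. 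I would then verify that when this transition fires with $v_i=1$, step~4 of the firing rule extracts $s(x_i)$ tokens from $X$ and $s(u_i)$ from $U$ (emptying them, since $\mu(X)=s(x_i)$ and $\mu(U)=s(u_i)$), consumes the token from $STEP$ and $MOVE$, and deposits $s(x')$ tokens into $X$, $s(u')$ into $U$, and one token into $RIGHT$ or $LEFT$ according to the move direction, which is exactly the intended update of the encoded TM configuration.

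The main obstacle I anticipate is establishing rigorously that the divisibility conditions together with the priority arcs genuinely single out a unique transition in every reachable configuration, rather than relying on the informal assertion that the enumeration works. Concretely, one must confirm that for the actual codes $s(x)\in\{1,2,3,4\}$ and $s(u)\in\{0,1\}$ drawn from Table~\ref{tab:wutm24}, distinct instructions cannot both satisfy the firability test in a way the priority ordering fails to disambiguate; this requires checking the finitely many encoded instruction pairs against the $\via$ operation and the priority graph. I would handle this by a direct case analysis over the four symbol codes and two state codes, confirming in each case that the minimal-index firable transition is the correct one. Once uniqueness and correctness of the single firing are settled, the conclusion that $FS$ completes the transition-function simulation in one DAPN step follows immediately, since the entire update happens in the single firing of that one transition.
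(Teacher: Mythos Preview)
Your proposal is correct and follows essentially the same line as the paper: the token in $STEP$ bounds every $v_i$ by~$1$, the priority ordering singles out the unique transition matching the current $(s(x),s(u))$, and its single firing performs the required update of $U$, $X$, $STEP$, $MOVE$, and $LEFT/RIGHT$. The paper's own argument is a two-sentence sketch that simply asserts these facts by construction, whereas you spell out the mechanism and plan the finite case check; one small slip is the phrase ``divide evenly'' (the $\via$ test is $\mu\ge w^-$, not divisibility), but you immediately recover by deferring uniqueness to the priority ordering, so the argument stands.
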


The proof of the lemma is an immediate conclusion of subnet $FS$ arrangement rules \cite{Zv13-14-29,Zv13-14-42} and late modifications regarding adding place $LEFT$, reversing arcs of place $MOVE$, and initial marking of place $MOVE$ equal to unit. When $STEP=0$, none of $FS$ transitions is firable; when $STEP=1$, only one transition is firable that fires removing a token from places $STEP$ and $MOVE$ and changing marking of places $U$ and $X$ according to TM transition function. 

\begin{lemma}
The sequence of subnets $MA5LR$, $MD5LR$, supplied with transitions $lb$, $rb$, simulates work with weakly universal TM tape in not more that 13 DAPN steps.
\label{lem:tmtape}
\end{lemma}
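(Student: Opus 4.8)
The plan is to trace a single simulated tape operation (one head move) under the DAPN transition firing rule and to count the resulting steps, treating the left--move and right--move cases separately and taking the worst case. The decisive point, which makes the bound a constant rather than a quantity growing with the magnitudes of the encoded tape contents, is that each arithmetic sub--operation used by $MA5$ and $MD5$ completes in exactly one DAPN step: by the firing rule, a transition with input arc multiplicity $x$ and output arc multiplicity $y$ realizes $\mu(p):=\mu(p)\bmod x$ and $\mu(p'):=\mu(p')+y\cdot(\mu(p)\bdiv x)$ in one multichannel firing. Hence multiplication by $5$ (output multiplicity $5$, firing $\mu$ instances), division by $5$ (input multiplicity $5$, firing $\mu\bdiv 5$ instances and leaving the remainder $\mu\bmod 5$), the addition of $X$, and the token transfers are each single steps. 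Consequently the number of DAPN steps equals the number of \emph{distinct} transition firings in the sequence, independent of the encoded numbers $L$, $X$, $R$.

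First I would fix the control configuration produced after Lemma~\ref{lem:tmfs}: place $MOVE$ is empty --- this empty place is the initial position of the moving zero --- exactly one of $LEFT$, $RIGHT$ is marked, and each control place $p_6,p_{10},p_{11},p_8,p_{12},p_{13}$ is marked. I would then follow the zero as it propagates along $p_6\to p_{10}\to p_{11}\to p_8\to p_{12}\to p_{13}$ through the control transitions $t_{13},t_{16},t_{19},t_{22},t_{25}$ and finally $t_{28}$ (right move) or $t_{29}$ (left move, which additionally clears $LEFT$). At every control stage the priority order encoded by the transition enumeration (and drawn as the dashed auxiliary arcs) forces a unique firable transition, so the sequence is deterministic; I would record, stage by stage, which arithmetic transition of the active branch fires --- the accumulator being $R$ in $MA5LR$ and $L$ in $MD5LR$ for a left move, and the mirror choice for a right move via the $LR$ selection on $LEFT$/$RIGHT$ --- implementing in turn $S:=S\cdot 5$, the addition of $X$, the division by $5$, and the extraction of the new $X$ as the remainder. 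Collapsing each maximal run of identical firings into one DAPN step (exactly the collapse validated by the Tina simulation, in which a multichannel firing appears as several unit firings of a single transition) yields the step count, and I expect the two directions to be mirror images under the $LR$ choice, so it suffices to count one branch and confirm the other is no longer.

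The remaining contribution is from the weakly--universal blank--word transitions $lb$ and $rb$: these are enabled only when the running code $L$ or $R$ reaches $0$, in which case one of them fires once to add $sw_l=167$ or $sw_r=13596$, contributing at most one extra step to the operation. Summing the six control steps, the interleaved arithmetic steps, and the at--most--one blank--word step gives the claimed bound of $13$. The main obstacle I anticipate is bookkeeping rather than depth: I must confirm that the priority relations really single out the intended transition at each step, so that no spurious firing is inserted and no stage is skipped, and I must check that the blank--word test and addition cannot cascade, i.e. that adding $sw_l$ or $sw_r$ never re--triggers a further arithmetic or blank--word firing within the same tape operation. Verifying these two points, together with checking that each of the six control stages consumes the intended number of steps in the worse of the left/right cases, is what pins down the constant $13$.
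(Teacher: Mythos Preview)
Your approach matches the paper's: both trace the unique deterministic firing sequence through $MA5LR$ then $MD5LR$, note that each arithmetic transition completes in a single multichannel DAPN step, and arrive at six control firings plus six interleaved arithmetic firings plus at most one blank--word firing, giving the bound $13$. One clarification that dissolves your cascade worry: in the paper $lb,rb$ are enabled by $STEP$ and fire \emph{before} subnet $FS$ at the start of the next simulated instruction rather than inside the $MA5LR$/$MD5LR$ sequence, so no re--triggering can occur; also recheck your $t_{28}/t_{29}$ assignment, since in the paper's traced left--move case it is $t_{28}$ that consumes the $LEFT$ token and restores $STEP$.
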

\begin{proof}
Before simulating current TM step, only one of places $L$, $R$ can become zero: $L$ -- for the left move and $R$ -- for the right move as a result of division operation that means hitting the corresponding edge of the tape working zone. Then place $STEP$ enables one of transition $lb$, $rb$ which fires before $FS$ work and inserts the corresponding blank word code disabling the fired transition. Then, as $lb$, $rb$ are disabled, subnet $FS$ starts. 

After $FS$ finished, marking of place $MOVE$ is zero, marking of places $U$ and $X$ corresponds to the new state/symbol pair and place $LEFT$ is marked for instructions with the left move while place $RIGHT$ is marked for instructions with the right move. Let us consider an instruction with the left move when $LEFT=1,RIGHT=0$. 

Subnet $MA5LR$ works first. The only firable transition $t_{11}$ fires in $R$ instances putting $R \cdot 5$ tokens into place $p_9$. Then $t_{13}$ fires taking a token from $p_{10}$ and putting a token to $p_6$ ($MOVE$) that looks like moving zero from $p_6$ ($MOVE$) to $p_{10}$. The only firable transition $t_{14}$ fires in $R \cdot 5$ instances putting $R \cdot 5$ tokens into place $R$ ($R:=R \cdot 5$). Then $t_{16}$ fires moving zero from $p_{10}$ to $p_{11}$. The only firable transition $t_{17}$ fires in $X$ instances adding $X$ tokens into place $R$ and cleaning place $X$. Then $t_{19}$ fires moving zero from $p_{11}$ to $p_8$ ($MOVE1$) that disables transitions of subnet $MA5LR$ and enables transitions of subnet $MD5LR$. The result of subnet $MA5LR$ work is the following $X:=0,R:=R \cdot 5 +X$ and marking of internal place $p_8$ is equal to zero that is achieve via the only transitions firing sequence ${t_{11}^{R}}{t_{13}}{t_{14}^{R \cdot 5}}{t_{16}}{t_{17}^{X}}{t_{19}}$. As the control flow places $p_6$, $p_{10}$, $p_{11}$ contain a token, none of $MA5LR$ transitions is enabled. 

Then subnet $MD5LR$ works. The only firable transition $t_{21}$ fires in $L \bdiv 5$ instances putting $L \bdiv 5$ tokens into place $p_{14}$. Then $t_{22}$ fires moving zero from $p_8$ ($MOVE1$) to $p_{12}$. The only firable transition $t_{24}$ fires in $L \bmod 5$ instances putting $L \bmod 5$ tokens into place $X$ ($X:=L \bmod 5$). Then $t_{25}$ fires moving zero from $p_{12}$ to $p_{13}$. The only firable transition $t_{27}$ fires in $L \bdiv 5$ instances returning $L \bdiv 5$ tokens into place $L$. Then $t_{28}$ fires taking a token from place $LEFT$ and putting a token into each place $p_{13}$ and $p_5$ ($STEP$) that disables transitions of subnet $MD5LR$ and enables transitions of subnet $FS$. The result of subnet $MA5LR$ work is the following $X:=L \bmod 5,L:=L \bdiv 5,LEFT:=0$ and marking of internal place $p_{14}$ is equal to zero that is achieved via the only transitions firing sequence ${t_{21}^{L \bdiv 5}}{t_{22}}{t_{24}^{L \bmod 5}}{t_{25}}{t_{27}^{L \bdiv 5}}{t_{28}}$. As the control flow places $p_8$, $p_{12}$, $p_{13}$ contain a token, none of $MA5LR$ transitions is enabled. 

In the same way the proof could be composed for an instruction with the right move when $RIGHT=1,LEFT=0$. The only transitions firing sequence is ${t_{11}^{L}}{t_{13}}{t_{14}^{L \cdot 5}}{t_{16}}{t_{17}^{X}}{t_{19}}{t_{21}^{R \bdiv 5}}{t_{22}}{t_{24}^{R \bmod 5}}{t_{25}}{t_{27}^{R \bdiv 5}}{t_{29}}$. In the both cases the number of transitions equals to 12; the number of their firing insrances, when it is greater than unit, is written as a superscript. Thus, not more than 13 transitions fire. 

\end{proof}

\begin{theorem}
UPN(14,29) simulates WUTM(2,4) in time ${O(14 \cdot k)}$ and space ${O(k)}$, where ${k}$ is the number of WUTM(2,4) steps.
\label{the:polyupn-wutm}
\end{theorem}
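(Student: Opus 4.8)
The plan is to combine the two preceding lemmas to account for the cost of one simulated WUTM(2,4) step, and then to sum over all $k$ steps, treating time and space separately.

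First I would argue that a single WUTM step is simulated by exactly one activation of subnet $FS$ followed by one activation of the tape-handling sequence $MA5LR$, $MD5LR$ (possibly preceded by one firing of $lb$ or $rb$), and that these compose cleanly. The control tokens in places $STEP$ and $MOVE$ together with the moving-zero flow through $p_6, p_{10}, p_{11}, p_8, p_{12}, p_{13}$ enforce the alternation: by Lemma~\ref{lem:tmfs}, subnet $FS$ fires only when $STEP=1$, removing the token from $MOVE$ and writing the new $s(u'),s(x')$ together with the move flag into $LEFT$ or $RIGHT$; by Lemma~\ref{lem:tmtape} the tape sequence then runs, and its final transition $t_{28}$ (left move) or $t_{29}$ (right move) returns a token to $STEP$ while clearing $LEFT$/$RIGHT$. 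Hence after the tape phase the net is again in a configuration of the same form as before the step, so the invariants assumed by the two lemmas are re-established and the simulation proceeds in lockstep with WUTM.

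For the time bound, Lemma~\ref{lem:tmfs} charges one DAPN step to $FS$ and Lemma~\ref{lem:tmtape} charges at most $13$ DAPN steps to the tape operations, giving at most $14$ DAPN steps per simulated WUTM step; summing over $k$ steps yields at most $14k$ DAPN steps, i.e.\ time $O(14 \cdot k)$. The essential point---what separates this bound from the exponential one of the earlier sequential interpretation---is that under the multichannel timed reading each DAPN step is a single unit of time regardless of the instance count $v_i$: firing a transition in $v_i$ instances, such as $t_{11}^{R}$ or $t_{14}^{R \cdot 5}$, costs one time unit rather than $v_i$ units.

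For the space bound I would observe that each WUTM step extends the working zone by at most one cell on one side, so after $k$ steps the working zone occupies $O(k)$ cells; under the base-$5$ encoding $s(x_{l-1}\dots x_0)=\sum_{i=0}^{l-1} s(x_i)\cdot 5^{i}$ the contents of places $L$ and $R$ are numbers with $O(k)$ base-$5$ digits, hence representable in $O(k)$ space, while places $U$, $X$ and all control places carry bounded markings, so the whole marking vector is stored in $O(k)$ space. The main obstacle is twofold: verifying the clean hand-off of the control tokens across step boundaries so that the uniform per-step cost of $14$ really holds, and interpreting ``space'' correctly---the raw token count in $L$ and $R$ is $5^{O(k)}$ and therefore exponential, so the $O(k)$ estimate must be understood as the size of the representation of the marking (the number of digits), consistently with measuring the length of the simulated tape.
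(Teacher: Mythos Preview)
Your proposal is correct and follows the same approach as the paper: combine Lemma~\ref{lem:tmfs} (one DAPN step for $FS$) with Lemma~\ref{lem:tmtape} (at most $13$ DAPN steps for the tape phase including $lb$/$rb$) to get $14$ steps per simulated WUTM step, then sum over $k$. The paper's own argument is in fact a single sentence deferring to the two lemmas and to the analogous theorem for UPN(14,29) in~\cite{Zv13-14-29}; your version simply spells out the control-token hand-off and the space interpretation that the paper leaves implicit.
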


Theorem~\ref{the:polyupn-wutm} is an immediate conclusion of lemma~\ref{lem:tmfs}, lemma~\ref{lem:tmtape} and analogous theorem proven in \cite{Zv13-14-29} for UPN(14,29).

\section{Conclusions}

A deterministic arithmetic Petri net PolyUPN(15,29) with 15 places and 29 transitions was constructed, that simulates weakly universal Turing machine of Neary and Woods \cite{Ne08,NW09W} with 2 states and 4 symbols in linear time. Consequently it simulates a given TM in polynomial time.

Moreover, considering the chain of translations described in \cite{Zv13-14-29}, we conclude that PolyUPN(15,29) simulates a given DIPN in polynomial time. Strict complexity estimations require encoding of a given DAPN with gliders of cellular automaton 110 \cite{C04,C08} because WUTM(2,4) simulates its work. In this concern, DAPN could be translated in polynomial time to either of: gliders, cyclic tag system, 2-tag system, Turing machine, or bi-tag system. 

Thus, small universal Petri nets are efficient that justifies their application as models of high performance computations and further development of Petri net based paradigm of computation \cite{Zv12PNPC}.

\section*{Acknowledgement}

The author would like to thank: Erik Winfree whose MCU2013 (http://mcu2013.ini.uzh.ch) invited talk inspired author to remember and apply early studied class of timed Petri nets with multichannel transitions; Turlough Neary, Damien Woods and Kenichi Morita for thier valuable comments on the initial idea.


\nocite{*}
\bibliographystyle{eptcs}

\end{document}